\newtheorem{thm}{Theorem}[section]
\newtheorem{prop}{Proposition}[section]
\newcommand{\ndash}{--}
\newcommand{\citep}{\cite}
\newcommand{\ax}[1]{\ensuremath{\mathsf{#1}}}
\newcommand{\Ph}{\mathsf{Ph}}
\newcommand{\Q}{\mathit{Q}}
\newcommand{\B}{\mathit{B}}
\newcommand{\W}{\mathsf{W}}
\newcommand{\IOb}{\mathsf{IOb}}
\newcommand{\Ob}{\mathsf{Ob}}
\newcommand{\vx}{\bar x}
\newcommand{\vy}{\bar y}
\newcommand{\vz}{\bar z}
\newcommand{\vw}{\bar w}
\newcommand{\then}{\;\rightarrow\;}
\newcommand{\oszt}{\slash}
\newcommand{\gyok}{\sqrt{\phantom{n}}}
\newcommand{\de}{\stackrel{d}{=}}
\newcommand{\defiff}{\ \stackrel{d}{\Longleftrightarrow}\ }
\renewcommand{\iff}{\leftrightarrow}
\newcommand{\ev}{\mathsf{ev}}
\newcommand{\w}{\mathsf{w}}
\definecolor{rd}{rgb}{.7,0,0}
\definecolor{bl}{rgb}{0,0,1}
\definecolor{gr}{rgb}{0,.5,0}
\newcommand{\regi}[1]{} %ez volt a regi szoveg
\begin{document}

\title{On Logical Analysis of Relativity Theories}
\author{Hajnal Andr\'eka, Istv\'an N\'emeti, Judit X. Madar\'asz and Gergely Sz\'ekely}
\thanks{This research is supported by the Hungarian Scientific Research Fund for basic
 research grant No.~T81188, as well as by a Bolyai grant
 for J.~X.~Madar\'asz.}

%\emptytitle
\begin{abstract}
The aim of this paper is to give an introduction to our axiomatic
logical analysis of relativity theories.
\end{abstract}
\maketitle

\section{introduction}

Our general aim is to build up relativity theories as theories in the
sense of mathematical logic. So we axiomatize relativity theories
within pure first-order logic (FOL) using simple, comprehensible and
transparent basic assumptions (axioms). We strive to prove all the
surprising predictions of relativity from a minimal number of
convincing axioms. We eliminate tacit assumptions from relativity by
replacing them with explicit axioms (in the spirit of the foundation
of mathematics and Tarski's axiomatization of geometry). We also
elaborate logical and conceptual analysis of our theories.

Logical axiomatization of physics, especially that of relativity
theory, is not a new idea, among others, it goes back to such
leading scientists as Hilbert, Reichenbach, Carnap, G{\"o}del, and
Tarski. Relativity theory was intimately connected to logic from the
beginning, it was one of the central subjects of logical positivism.
For a short survey on the broader literature,
 see, e.g., \citep{AMNsamples}. Our aims go beyond these
approaches in that along with axiomatizing relativity theories we also
analyze in detail their logical and conceptual structure and, in
general, investigate them in various ways (using our logical framework
as a starting point).

A novelty in our approach is that we try to keep the transition from
special relativity to general relativity logically transparent and
illuminating. We ``derive'' the axioms of general relativity from
those of special relativity in two natural steps.  First we extend our
axiom system for special relativity with accelerated observers
(sec.\ref{acc-sec}). Then we eliminate the distinguished status of
inertial observers at the level of axioms (sec.\ref{gen-sec}).

Some of the questions we study to clarify the logical structure of
relativity theories are:
\begin{itemize}
\item
 What is believed and why?
\item
Which axioms are responsible for certain predictions?
\item
What happens if we discard some axioms?
\item
Can we change the axioms and at what price?
\end{itemize}

Our aims stated in the first paragraph reflect, partly, the fact
that we axiomatize a physical theory. Namely, in physics the role of
axioms (the role of statements that we assume without proofs) is
more fundamental than in mathematics. Among others, this is why we
aim to formulate simple, logically transparent and intuitively
convincing axioms.Our goal is that on our approach, surprising or
unusual predictions be theorems and not assumed as axioms.
For example, the prediction ``no faster than light
motion ...'' is a theorem on our approach and not an axiom, see
Thm.\ref{thm-noftl}.

Getting rid of unnecessary axioms is especially important in a
physical theory. When we check the applicability of a physical theory
in a situation, we have to check whether the axioms of the theory hold
or not. For this we often use empirical facts (outcomes of concrete
experiments). However, these correspond to existentially
quantified theorems\footnote{We do not want to assume every
  experimental fact as an axiom. We only want them to be consequences
  of our theories.} rather than to universally quantified statements{\textemdash}which
the axioms usually are. Thus while we can easily disprove the axioms
by referring to empirical facts, we can verify these axioms only to
a certain degree. Some of the literature uses the term 'empirical
  fact' for universal generalization of an empirical fact elevated to
  the level of axioms, see, e.g., \cite[\S
  4]{GSz}, \citep{Szabo}. We simply call these generalizations
  (empirical) axioms.

\section{why relativity?}

For one thing, Einstein's theory of relativity not just had but still
has a great impact on many areas of science. It has also greatly
affected several areas in the philosophy of science. Relativity theory
has an impact even on our every day life, e.g., via GPS technology
(which cannot work without relativity theory). Any theory with such an
impact is also interesting from the point of view of axiomatic
foundations and logical analysis.

Since spacetime is a similar geometrical object as space,
axiomatization of relativity theories (or spacetime theories in
general) is a natural continuation of the works of Euclid, Hilbert,
Tarski and many others axiomatizing the geometry of space.

\section{why axiomatic method?}

There are many examples showing the benefits of using axiomatic
method.  For example, if we decompose relativity theories into little
parts (axioms), we can check what happens to our theory if we drop,
weaken or replace an axiom or we can take any prediction, such as the
twin paradox, and check which axiom is and which is not needed to
derive it.  This kind of reverse thinking helps to answer the why-type
questions.  For details on answering why-type questions by the
methodology of the present work, see \cite[12{\ndash}13.]{pezsgo},
\citep{wqp}.

The success story of axiomatic method in the foundations of
mathematics also suggests that it is worth applying this method in
the foundations of spacetime theories \citep{FriFOM1},
\citep{FriFOM2}. Let us note here that
Euclid's axiomatic-deductive approach to geometry also made a great
impression on the young Einstein, see
\citep{herschbach}.

Among others, logical analysis makes relativity theory modular: we can
change some axioms, and our logical machinery ensures that we can
continue working in the modified theory. This modularity might come
handy, e.g., when we want to unify general relativity and quantum
theory to a theory of quantum gravity.
For further reasons why to apply the axiomatic method to spacetime
theories, see, e.g.,  \citep{AMNsamples}, \citep{pezsgo}, \citep{guts},
\citep{schutz}, \citep{suppes}.

\section{why first-order logic?}

We aim to provide a logical foundation for spacetime theories similar
to the rather successful foundations of mathematics, which, for good
reasons, was performed strictly within FOL.  One of these reasons is
that FOL helps to avoid tacit assumptions.  Another is that FOL has a
complete inference system while second-order logic (or higher-order
logic) cannot have one.

Still another reason for choosing FOL is that it can be viewed as a
fragment of natural language with unambiguous syntax and
semantics. Being a {\it fragment of natural language} is useful in our
project because one of our aims is to make relativity theory
accessible to a broad audience. {\it Unambiguous syntax and semantics}
are important, because they make it possible for the reader to always
know what is stated and what is not stated by the axioms. Therefore
they can use the axioms without being familiar with all the tacit
assumptions and rules of thumb of physics (which one usually learns via
many, many years of practice).

For further reasons why to stay within FOL when dealing with axiomatic
foundations, see, e.g., \cite[\S Appendix: Why FOL?]{pezsgo},
\citep{ax}, \cite[\S 11]{Szphd}, \citep{vaananen}, \citep{wolenski}.

\section{special relativity}

Before we present our axiom system let us go back to Einstein's
original (logically non-formalized) postulates.  Einstein based his
special theory of relativity on two postulates, the principle of
relativity and the light principle: ``The laws by which the states of
physical systems undergo change are not affected, whether these
changes of state be referred to the one or the other of two systems of
coordinates in uniform translatory motion.'' and ``Any ray of light
moves in the `stationary' system of co-ordinates with the determined
velocity $c$, whether the ray be emitted by a stationary or by a
moving body.'', see \citep{einstein05}.

The logical formulation of Einstein's principle of relativity is not
an easy task since it is difficult to capture axiomatically what ``the
laws of nature'' are in general. Nevertheless, the principle of
relativity can be captured by our FOL approach, see \citep{pezsgo},
\cite[\S 2.8.3]{Mphd}.

Instead of formulating the two original principles,
we formulate the following consequence of theirs: ``the speed of light
signals is the same in every direction everywhere according to
every inertial observer'' (and not just according to the
`stationary' observer). Here we will base our axiomatization on this
consequence and call it light axiom. We will soon see that the light
axiom can be regarded as the key assumption of
special relativity.

Since we want to axiomatize special relativity, we have to fix some
formal language in which we will write up our axioms.
Let us see the basic concepts
(the ``vocabulary" of the FOL language) we will use. We would
like to speak about motion. So we need a basic concept of things
that can move. We will call these object
\textit{bodies}.\footnote{By bodies we mean anything
  which can move, e.g., test-particles, reference frames,
  electromagnetic waves, etc.}  The light axiom requires a
distinguished type of bodies called \textit{photons} or \textit{light
  signals}.\footnote{Here we use light signals and photons as synonyms
  because it is not important here whether we think of them as
  particles or electromagnetic waves. The only thing that matters here
  is that they are ``things that can move.'' So they are bodies in the
  sense of our FOL language.}  We will represent motion as the
changing of spatial location in time.  Thus we will use reference
frames for coordinatizing events (meetings of bodies). Time and space
will be marked by \textit{quantities}.  The structure of quantities
will be an \textit{ordered field} in place of the field of real
numbers.\footnote{ Using ordered fields in place of the field of real
  numbers increases the flexibility of the theory and reduces the
  amount of mathematical presuppositions.  For further motivation in
  this direction, see, e.g., \citep{ax}.  Similar remarks apply to
  our other flexibility-oriented decisions, e.g., to treat the
  dimension of spacetime as a variable.}  For simplicity, we will associate special
  bodies to reference frames. These special bodies will be called
  ``observers.'' Observations will be formalized/represented
   by means of the {\it worldview relation}.

To formalize the ideas above, let us fix a natural number $d\ge 2$
for the dimension of spacetime. To axiomatize
theories of the $d$-dimensional spacetime, we will use the following
two-sorted FOL language:
\begin{equation*}
\{\, \B, \IOb, \Ph,\; \Q,+,\cdot,\; \W\,\},
\end{equation*}
where $\B$ (bodies) and $\Q$ (quantities) are the two
sorts,\footnote{That our theory is two-sorted
means only that there
  are two types of basic objects (bodies and quantities) as
  opposed to, e.g., set theory where there is only one type of basic
  objects (sets).}  $\IOb$ (inertial observers) and $\Ph$ (light
signals or photons) are one-place relation symbols of sort $\B$, $+$
and $\cdot$ are two-place function symbols of sort $\Q$, and $\W$ (the
worldview relation) is a $2+d$-place relation symbol the first two
arguments of which are of sort $\B$ and the rest are of sort $\Q$.

Atomic formulas $\IOb(k)$ and $\Ph(p)$ are translated as ``\textit{$k$
  is an inertial observer},'' and ``\textit{$p$ is a photon},''
respectively.  To speak about coordinatization, we translate
$\W(k,b,x_1,\ldots,x_{d-1},t)$ as ``\textit{body $k$ coordinatizes
  body $b$ at space-time location $\langle
  x_1,\ldots,x_{d-1},t\rangle$},'' (i.e., at space location $\langle
x,\ldots,x_{d-1}\rangle$ and at instant $t$). Sometimes we use the
more picturesque expressions \textit{sees} or \textit{observes} for
{\em coordinatizes}.  However, these cases of ``seeing'' and
``observing'' have nothing to do with visual seeing or observing;
they only mean associating coordinate points to
bodies.

The above, together with statements of the form $x=y$ are the
so-called \textit{atomic formulas} of our FOL language, where $x$ and
$y$ can be arbitrary variables of the same sort, or terms built up
from variables of sort $\Q$ by using the two-place operations $\cdot$
and $+$. The \textit{formulas} are built up from these atomic formulas
by using the logical connectives \textit{not} ($\lnot$), \textit{and}
($\land$), \textit{or} ($\lor$), \textit{implies} ($\rightarrow$),
\textit{if-and-only-if} ($\leftrightarrow$) and the quantifiers
\textit{exists} ($\exists$) and \textit{for all} ($\forall$).  For the
precise definition of the syntax and semantics of FOL, see, e.g.,
\cite[\S 1.3]{CK}.%, \cite[\S 2.1, \S 2.2]{End}, or
                  %\cite[39{\ndash}46]{HMT}.

To meaningfully formulate the light axiom, we have to provide some
algebraic structure for the quantities. Therefore,  in our
first axiom, we state some usual properties of
addition $+$ and multiplication $\cdot$ true for real numbers.
\begin{description}
\item[\underline{\ax{AxFd}}] The quantity part $\langle
  \Q,+,\cdot\rangle$ is a Euclidean field, i.e.,\\
%\item
$\bullet$ $\langle\Q,+,\cdot\rangle$ is a field in the sense of abstract
algebra,\\
%\item
$\bullet$ the relation $\le$ defined by
$\,x\le y\defiff \exists
  z\enskip x+z^2=y\,$ is a linear ordering on $\Q$, and\\
%item
$\bullet$ Positive elements have square roots: $\forall x\enskip\exists y\enskip x=y^2\lor-x=y^2$.
\end{description}
\noindent The field-axioms (see, e.g., \cite[40{\ndash}41.]{CK})
%,\cite[38.]{Hodges})
say that $+$, $\cdot$ are associative and
commutative, they have neutral elements $0$, $1$ and inverses $-$,
$\oszt$ respectively, with the exception that $0$ does not have an
inverse with respect to $\cdot\,$, as well as $\cdot$ is additive with
respect to $+$. We will use $0$, $1$, $-$, $\oszt$, $\gyok$ as derived
(i.e., defined) operation symbols. %As usual, $x^2$ denotes $x\cdot x$.

\ax{AxFd} is a ``mathematical" axiom in spirit. However, it has
physical (even empirical) relevance. Its physical relevance is that
we can add and multiply the outcomes of our measurements and some
basic rules apply to these operations. Physicists usually use
all properties of the real numbers tacitly, without stating
explicitly which property is assumed and why. The two properties of
real numbers which are the most difficult to defend from an empirical
point of view are the Archimedean property, see \citep{Rosinger08},
\cite[\S 3.1]{Rosinger09}, and the supremum property,\footnote{The
supremum property (i.e., \textit{every} nonempty and
  bounded \textit{subset} of the real numbers has a least upper bound) implies
  the Archimedean property. So if we want to get ourselves free from
  the Archimedean property, we have to leave this property, too.} see
the remark after the introduction of axiom \ax{Cont} on
p.\pageref{p-cont}.

Euclidean fields got their name after their role in Tarski's FOL
axiomatization of Euclidean geometry \citep{TarskiElge}. By \ax{AxFd}
we can reason about the Euclidean structure of a coordinate system
the usual way, we can introduce Euclidean distance, speak about
straight lines, etc.  In particular, we will use the following
notation for $\vx,\vy\in\Q^n$ (i.e., $\vx$ and $\vy$ are $n$-tuples
over $\Q$) if $n\ge 1$:
\begin{equation*}
|\vx|\de\sqrt{x_1^2+\dots+x_n^2},\quad\text{ and }\quad
  \vx-\vy\de\langle x_1-y_1,\dots,x_n-y_n\rangle.
\end{equation*}

\noindent
We will also use the following two notations:
\begin{equation*}
\vx_s\de \langle x_1,\ldots, x_{d-1}\rangle \quad \text{ and }\quad {x_t}\de x_d
\end{equation*}
for the \textit{space component} and the
\textit{time component} of $\vx= \langle x_1,\ldots, x_d\rangle \in\Q^d$,
respectively.

Now let us see how the light axiom can be formalized in our FOL
language.

\begin{description}
\item[\underline{\ax{AxPh}}] For any inertial observer, the speed of
  light is the same in every direction everywhere, and it is
  finite. Furthermore, it is possible to send out a light signal in
  any direction. Formally:
\begin{multline*}
\forall m\enskip\exists c_m\enskip\forall \vx\vy \enskip \IOb(m)\rightarrow\\
\big(\exists p \enskip \Ph(p)\land \W(m,p,\vx)\land \W(m,p,\vy)\big)
\leftrightarrow |\vy_s-\vx_s|= c_m\cdot|y_t-x_t|.
\end{multline*}
\end{description}

Axiom \ax{AxPh} has an immediate physical meaning.  This axiom
is not only implied by the two original
principles of relativity, but it is well supported by
experiments, such as the Michelson-Morley
experiment. Moreover, it has been continuously tested ever since
then. Nowadays it is tested by GPS technology.

Axiom \ax{AxPh} says that ``It is \textit{possible} for a photon to
move from $\vx$ to $\vy$ iff ...''. So, a notion of possibility plays a
role here. In the present paper we work in an extensional framework,
as is customary in geometry and in spacetime theory.  However, it
would be more natural to treat this ``possibility phenomenon" in a
modal logic framework, and this is more emphatically so for
relativistic dynamics \citep{dyn-studia}. It would be interesting to
explore the use of modal logic in our logical analysis of relativity
theory. This investigation would be a nice unification of the works of
Imre Ruzsa's school on modal logic and the works of our Tarskian
spirited school on axiomatic foundations of relativity theory. Robin
Hirsch's work can be considered as a first step along this
road \citep{Hirsch}.

Let us note that \ax{AxPh} does not require that the speed of light be
the same for every inertial observer or that it be nonzero.  It
requires only that the speed of light according to a fixed inertial
observer be a quantity which does not depend on the direction or the
location.

Why do we not require that the speed of light is nonzero? The main
reason is that we are building our logical foundation of spacetime
theories examining thoroughly each part
of each axiom to see where and why we should assume them. Another
(more technical) reason is that it will be more natural to include
this assumption ($c_m\ne 0$) in our auxiliary axiom \ax{AxSm}
on page~\pageref{axsymd}.

Our next axiom connects the worldviews of different inertial observers
by saying that all observers observe the same ``external" reality (the
same set of events). Intuitively, by the event occurring for $m$ at
$\vx$, we mean the set of bodies $m$ observes at $\vx$. Formally:
\begin{equation*}
\ev_m(\vx)\de\{ b : \W(m,b,\vx)\}.
\end{equation*}

\begin{description}
\item[\underline{\ax{AxEv}}] All inertial observers coordinatize the
  same set of events:
\begin{equation*}
\forall mk\enskip \IOb(m)\land\IOb(k)\enskip\rightarrow\enskip
\forall \vx\enskip \exists \vy\enskip \forall b\enskip
\W(m,b,\vx)\leftrightarrow\W(k,b,\vy).
\end{equation*}
\end{description}

This axiom is very natural and tacitly assumed in the non-axiomatic
approaches to special relativity, too.

Basically we are done. We have formalized the light axiom
\ax{AxPh}. We have introduced two supporting axioms (\ax{AxFd} and
\ax{AxEv}) for the light axiom which are simple and natural; however,
we cannot simply omit them without loosing some of the meaning of
\ax{AxPh}. The field axiom enables us to speak about distances, time
differences, speeds, etc. The event axiom ensures that different
inertial observers see the same events.

In principle, we do not need more axioms for analyzing/axiomatizing
special relativity, but let us introduce two more simplifying ones.
We could leave them out without loosing the essence of our theory,
it is just that the formalizations of the theorems would become more
complicated.

\begin{description}
\item[\underline{\ax{AxSf}}] Any inertial observer sees himself
  on the time axis:
\begin{equation*}
\forall m\enskip\IOb(m)\rightarrow\enskip \big(\forall \vx\enskip
\W(m,m,\vx) \leftrightarrow x_1=0\land x_2=0\land x_3=0\big).
\end{equation*}
\end{description}

The role of \ax{AxSf} is nothing more than making it easier to speak
about the motion of reference frames via the motion of their time
axes. Identifying the motion of reference frames with the motion of
their time axes is a standard simplification in the literature.
\ax{AxSf} is a way to formally capture this simplifying
identification.

Our last axiom is a symmetry axiom saying that all inertial observers use the
same units of measurements.

\begin{description}
\item[\underline{\ax{AxSm}}]\label{axsymd}
Any two inertial observers agree about the spatial distance between
two events if these two events are simultaneous for both of them;
furthermore, the speed of light is 1:
\begin{multline*}
\forall mk\enskip\IOb(m)\land\IOb(k)\rightarrow \forall
\vx\vy\vx'\vy' \enskip x_t=y_t\land x'_t=y'_t\land\\
\ev_m(\vx)=\ev_k(\vx')\land \ev_m(\vy)=\ev_k(\vy')\rightarrow
|\vx_s-\vy_s|=|\vx'_s-\vy'_s|, \text{ and }
\end{multline*}
\begin{multline*}
\forall
m\enskip\IOb(m)\rightarrow\exists
p\enskip\Ph(p)\land\W(m,p,0,0,0,0)\land\W(m,p,1,0,0,1).
\end{multline*}
\end{description}

Let us see how \ax{AxSm} states that ``all inertial observers use
the same units of measurements.'' That ``the speed of light is 1''
(besides that the speed of light is nonzero) means only that
observers are using units measuring time distances compatible with
the units measuring spatial distances, such as light years or light
seconds. The first part of \ax{AxSm} means that different observers
use the same unit measuring spatial distances. This is so because
if two events are simultaneous for both observers, they can measure
their spatial distance and the outcome of their
measurements are the same iff the two observers are using the same
units to measure spatial distances.

\noindent
Our axiom system for special relativity contains these 5 axioms only:
\begin{equation*}
\ax{SpecRel} \de \{ \ax{AxFd}, \ax{AxPh}, \ax{AxEv}, \ax{AxSf},
\ax{AxSm}\}.
\end{equation*}

In an axiom system, the axioms are the ``price" we pay, and the
theorems are the ``goods" we get for them. Therefore, we strive for
putting only simple, transparent, easy-to-believe statements in our
axiom systems. We want to get all the hard-to-believe predictions as
theorems. For example, we prove from \ax{SpecRel} that it is
impossible for inertial observers to move faster than light relative
to each other (``No FTL travel" for science fiction fans). In the
following, $\vdash$ means logical derivability.

\begin{thm}{\rm (no faster than light inertial observers)} \label{thm-noftl}
\begin{multline*}
\ax{SpecRel}\vdash \forall mk\vx\vy \quad
\IOb(m)\land\IOb(k)\\\land\W(m,k,\vx)\land\W(m,k,\vy)\land\vx\neq\vy
\then |\vy_s-\vx_s|<|y_t-x_t|.
\end{multline*}
\end{thm}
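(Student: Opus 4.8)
The plan is to pass from $m$'s worldview into $k$'s own worldview, where $k$ stands still, and then transport the geometric information back using the fact that photons, hence light cones, are observer-independent. First I would record the two auxiliary facts I will lean on. From the second clause of \ax{AxSm} together with \ax{AxPh} one gets that the light-speed constant $c_m$ of \ax{AxPh} equals $1$ for every inertial observer: the photon guaranteed to pass through $\langle 0,0,0,0\rangle$ and $\langle 1,0,0,1\rangle$ has equal spatial and temporal displacement, so $c_m\cdot 1=1$. Next, by \ax{AxEv} the events seen by $m$ are exactly the events seen by $k$, so for the two coordinate points $\vx,\vy$ at which $m$ sees $k$ there are points $\vx',\vy'$ with $\ev_m(\vx)=\ev_k(\vx')$ and $\ev_m(\vy)=\ev_k(\vy')$. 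Since $k$ belongs to both of these events, \ax{AxSf} forces the spatial parts of $\vx'$ and $\vy'$ to vanish; that is, in $k$'s own worldview the two events sit on his time axis, at instants $x'_t$ and $y'_t$.

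The engine of the argument is the key lemma that the worldview change preserves lightlikeness: for events shared by $m$ and $k$, their $m$-coordinates have equal spatial and temporal displacement (recall $c=1$) if and only if their $k$-coordinates do, because by \ax{AxPh} such separation is equivalent to the existence of a photon incident with both events, and by \ax{AxEv} ``being a photon incident with a given event'' does not depend on who coordinatizes it. Using this I first note that distinct coordinate points of $m$ name distinct events: if $\ev_m(\vx)=\ev_m(\vy)$ with $\vx\neq\vy$, then picking (over the Euclidean field, $d\ge 2$) a point $\vz$ on the light cone of $\vx$ but off that of $\vy$ and shooting the \ax{AxPh}-photon from $\vx$ through $\vz$ yields a photon incident with $\vy$ and $\vz$, a forbidden lightlike relation; hence $\vx'\neq\vy'$ and therefore $x'_t\neq y'_t$. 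The lightlike case $|\vy_s-\vx_s|=|y_t-x_t|$ is now excluded: a connecting photon would, transported into $k$'s worldview by \ax{AxEv}, be incident with the two time-axis events, whose spatial distance is $0$ while their time distance $|x'_t-y'_t|$ is nonzero, contradicting $c_k=1$ in \ax{AxPh}.

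It remains to exclude the spacelike case $|\vy_s-\vx_s|>|y_t-x_t|$, and this is the main obstacle. Preservation of lightlikeness alone cannot rule it out: in low dimension the two null directions can be interchanged by a lightcone-preserving map, such as $\langle x,t\rangle\mapsto\langle t,x\rangle$ in the plane, which sends timelike pairs to spacelike ones. So the argument must genuinely use the symmetry axiom \ax{AxSm}. The plan here is to feed the light-cone data back through \ax{AxSm}: the points common to the light cones of $\vx'$ and $\vy'$ are simultaneous in $k$'s worldview, and \ax{AxSm} ties the spatial distances measured by $m$ and by $k$ to one another for simultaneous events. This rigidifies the worldview transformation enough to force it to preserve the Minkowski interval up to sign, i.e.\ to be, in effect, a Poincar\'e transformation. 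Since the separation of $\vx',\vy'$ is timelike in $k$'s worldview (spatial distance $0$, time distance nonzero), its image $\vx,\vy$ is then timelike for $m$ as well, giving the desired strict inequality $|\vy_s-\vx_s|<|y_t-x_t|$. Carrying out this final rigidification cleanly, that is, turning \ax{AxSm} into sign-preservation of the interval, is the step I expect to require the most care.
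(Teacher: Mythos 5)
Your preparation is sound and matches what any proof must establish: from the second clause of \ax{AxSm} and \ax{AxPh} you correctly get $c_m=1$ for all inertial observers; \ax{AxEv} and \ax{AxSf} correctly place the two events at points $\vx',\vy'$ on $k$'s own time axis; the lemma that lightlike separation is preserved in both directions is right; and your argument that distinct coordinate points of $m$ carry distinct events (hence $x'_t\neq y'_t$) and the resulting exclusion of the lightlike case $|\vy_s-\vx_s|=|y_t-x_t|$ are complete and correct. Note that the paper itself does not prove the theorem in-text: it points to a geometrical proof in the cited Synthese paper, and the statement is also an immediate corollary of Thm.~\ref{thm-poi}, whose proof is likewise only cited.

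The genuine gap is exactly where you flag it, and it is larger than ``requires the most care.'' The spacelike case is the entire content of the theorem, and your proposal for it is not an argument but a reduction to Thm.~\ref{thm-poi} (that $\w_{mk}$ preserves the Minkowski interval), which is a substantially harder result than Thm.~\ref{thm-noftl} itself \ndash\ so as written the proposal proves the easy part and defers the hard part to a theorem you do not prove. Worse, the one concrete route you indicate does not go through as described: the first clause of \ax{AxSm} compares spatial distances only for pairs of events that are simultaneous for \emph{both} observers, whereas the pairs you propose to feed into it \ndash\ common points of the light cones of $\vx'$ and $\vy'$ \ndash\ are simultaneous for $k$ but, in the putative FTL configuration, are \emph{not} simultaneous for $m$ (in the plane, with $\vy-\vx=\langle a,b\rangle$ spacelike, the two common points of the cones of $\vx$ and $\vy$ differ by $\langle b,a\rangle$, whose time component $a$ is nonzero). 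So one must first manufacture event pairs simultaneous for both observers out of the FTL hypothesis (or argue quite differently, e.g.\ geometrically with lightlike lines and triangles as the cited proof does) before \ax{AxSm} can be brought to bear; none of that is present. Until the spacelike case is actually carried out, the theorem is not proved.
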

\noindent
For a geometrical proof of Thm.\ref{thm-noftl}, see \citep{AMNSz-Synthese}.

In relativity theory we are often interested in comparing
the worldviews of different observers. So we introduce the worldview
transformation between observers $m$ and $k$ as the following binary
relation:
\begin{equation*}
\w_{mk}(\vx,\vy)\defiff \ev_m(\bar
x)=\ev_k(\vy).
\end{equation*}

By Thm.\ref{thm-poi}, the worldview transformations between inertial
observers in the models of \ax{SpecRel} are Poincar{\'e}
transformations, i.e., transformations which preserve the so-called
Minkowski-distance $(y_t-x_t)^2 - |\vy_s-\vx_s|^2$ of $d$-tuples
$\vy,\vx$. For the definition, we refer to \cite[110.]{dinverno} or
\cite[66{\ndash}69.]{MTW}.

\begin{thm}\label{thm-poi}
\begin{equation*}
\ax{SpecRel} \vdash \forall m,k \enskip \IOb(m)\land \IOb(k) \then \w_{mk}
\text{ \rm is a Poincar{\'e} transformation.}
\end{equation*}
\end{thm}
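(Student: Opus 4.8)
The plan is to show that $\w_{mk}$ is a bijection of $\Q^d$ that preserves the \emph{light-cone relation} (Minkowski-lightlikeness), then to invoke an Alexandrov--Zeeman-type rigidity result to conclude that such a map is affine and preserves the Minkowski form up to a positive scale factor, and finally to use the symmetry axiom $\ax{AxSm}$ to eliminate that factor, leaving exactly a Poincar\'e transformation. The guiding idea is that the axioms let us read off the causal/light-cone structure from the physical vocabulary ($\Ph$, $\W$), and that this structure is rigid enough to determine the transformation up to scale.

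First I would reduce the physical content to geometry and establish invariance of the light cone. The second clause of $\ax{AxSm}$ forces $c_m=1$ for every inertial observer $m$ (the exhibited photon through $\langle 0,0,0,0\rangle$ and $\langle 1,0,0,1\rangle$ has speed $1$, and by $\ax{AxPh}$ the value $c_m$ does not depend on direction or location), so $\ax{AxPh}$ now says that, for $m$, a photon is present at both $\vx$ and $\vy$ exactly when $|\vy_s-\vx_s|=|y_t-x_t|$, i.e.\ exactly when $\vx,\vy$ are lightlike separated. Using this: if $\ev_m(\vx)=\ev_k(\vx')$ and $\ev_m(\vy)=\ev_k(\vy')$ with $\vx,\vy$ lightlike for $m$, then the witnessing photon $p$ (with $\W(m,p,\vx)$ and $\W(m,p,\vy)$) lies in both $\ev_k(\vx')$ and $\ev_k(\vy')$, so $\W(k,p,\vx')$ and $\W(k,p,\vy')$, and $\ax{AxPh}$ for $k$ makes $\vx',\vy'$ lightlike; the symmetric argument gives the converse. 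Thus $\w_{mk}$ preserves lightlikeness in both directions. Along the way I must check that $\w_{mk}$ is a bijection of $\Q^d$: $\ax{AxEv}$ makes it total and, by symmetry, onto, while single-valuedness and injectivity follow from the fact that a coordinate point is determined by the photon-lines through it---precisely the relation just shown to be invariant. Here Theorem~\ref{thm-noftl} together with $\ax{AxSf}$ can be used to see that $k$'s time axis is carried to a genuinely timelike line, which settles orientation.

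Next I would feed ``bijection preserving the light-cone relation'' into an Alexandrov--Zeeman-type rigidity theorem to conclude that $\w_{mk}$ is affine and preserves the Minkowski quadratic form $(y_t-x_t)^2-|\vy_s-\vx_s|^2$ up to a positive constant $\lambda$. The first clause of $\ax{AxSm}$ then pins $\lambda=1$: two events simultaneous for both $m$ and $k$ are required to keep their spatial distance, which fixes the spatial scale and so kills the dilation. With $\lambda=1$ the map preserves the Minkowski distance, i.e.\ it is a Poincar\'e transformation, as claimed.

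The hard part will be the Alexandrov--Zeeman step. The classical statement and its proof are over $\mathbb{R}$ and lean on completeness/continuity, whereas here $\Q$ is only a Euclidean field ($\ax{AxFd}$), so I would need a field-independent, essentially synthetic argument: show that light-cone preservation forces photon-lines to go to photon-lines, upgrade this to a collineation (lines to lines), and then apply a Euclidean-field version of the fundamental theorem of affine geometry to obtain affinity, checking that the induced field map is the identity rather than a nontrivial automorphism. The genuinely delicate case is $d=2$, where light-cone preservation alone admits non-affine maps (arbitrary increasing reparametrisations of the two null coordinates); there the first clause of $\ax{AxSm}$ is not a luxury but is exactly what forces affinity before the scale factor can even be discussed.
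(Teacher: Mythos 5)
Your plan is correct and follows essentially the same route as the proof the paper relies on (the paper itself only cites \cite[Thm.11.10]{logst} and \cite[Thm.3.2.2]{Szphd}): use $\ax{AxSm}$ to get $c_m=1$, show $\w_{mk}$ is a bijection preserving lightlike separation, apply an Alexandrov--Zeeman-type rigidity theorem adapted to Euclidean fields, and use the first clause of $\ax{AxSm}$ to kill the conformal factor (and the field automorphism). Your flagged caveats --- the field-independence of the rigidity step and the failure of plain light-cone rigidity when $d=2$ --- are exactly the points the cited proofs have to work through, so the outline is sound.
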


For the proof of Thm.\ref{thm-poi}, see \cite[Thm.11.10,
  640.]{logst} or \cite[Thm.3.2.2, 22.]{Szphd}.  By
Thm.\ref{thm-poi}, all predictions of special relativity, such as
``moving clocks slow down,'' are provable from \ax{SpecRel}.  For
details, see, e.g., \cite[\S 1]{AMNsamples}, \cite[\S 2]{logst},
\cite[\S 2.5]{pezsgo}.

\section{logical analysis}\label{anal-sec}

Let us illustrate here by a simple example what
we mean by logical analysis
of a theory. In \ax{AxEv} we have assumed that all observers
see the same (possibly infinite) meetings
of bodies. Let us try to weaken \ax{AxEv} to an axiom assuming
something similar but only for finite meetings
of bodies. A natural candidate is one of the
following finite approximations of \ax{AxEv}:
\begin{description}
\item[\underline{\ax{AxMeet_n}}] All inertial observers see
  the same  $n$-meetings of bodies:
\begin{multline*}
\forall mkb_1\ldots b_n\vx\enskip
\IOb(m)\land\IOb(k)\land\W(m,b_1,\vx)\land\ldots\land\W(m,b_n,\vx)\\
\then \exists \vy\enskip \W(k,b_1,\vy)\land\ldots\land\W(k,b_n,\vy).
\end{multline*}
\end{description}

For example, \ax{AxMeet_1} means only that inertial observers see
the same bodies.  Let us also introduce axiom scheme
\ax{Meet_\omega} as the collection of all the axioms \ax{AxMeet_n}.
By Prop.\ref{prop-meet}, \ax{AxMeet_n} is
strictly weaker assumption than \ax{AxMeet_{n+1}}
and \ax{AxEv} is strictly stronger than all the axioms of
\ax{Meet_\omega} together.

\begin{prop}\label{prop-meet}
\begin{eqnarray}
\ax{AxEv}&\vdash&\ax{AxMeet_{n+1}}\vdash\ax{AxMeet_{n}}\label{item-em}\\
\ax{AxMeet_n}&\nvdash&\ax{AxMeet_{n+1}}\label{item-mnm}\\
\ax{Meet_\omega}&\nvdash&\ax{AxEv} \label{item-mne}
\end{eqnarray}
\end{prop}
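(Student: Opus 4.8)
The proposition has three parts, and my plan is to handle them separately, as they have different characters.

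\medskip

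\noindent\textbf{The implications \eqref{item-em}.} The plan is to verify these syntactically. For $\ax{AxEv}\vdash\ax{AxMeet_{n+1}}$, I would take inertial observers $m,k$ and bodies $b_1,\dots,b_{n+1}$ all coordinatized by $m$ at a single location $\vx$. This means every $b_i\in\ev_m(\vx)$. Applying $\ax{AxEv}$ to $m,k,\vx$ yields a single $\vy$ with $\ev_m(\vx)$ and $\ev_k(\vy)$ having the same membership for every body; in particular each $b_i$ satisfies $\W(k,b_i,\vy)$, which is exactly the conclusion of $\ax{AxMeet_{n+1}}$. For $\ax{AxMeet_{n+1}}\vdash\ax{AxMeet_n}$, the point is that an instance of $\ax{AxMeet_n}$ is obtained from one of $\ax{AxMeet_{n+1}}$ by instantiating $b_{n+1}\de b_n$ (or any already-named body): the extra conjunct $\W(m,b_{n+1},\vx)$ is then redundant, and the conclusion for $n+1$ bodies implies the conclusion for $n$. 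Both directions are routine first-order derivations.

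\medskip

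\noindent\textbf{The non-implication \eqref{item-mnm}.} The plan here is to build, for each $n$, a model of $\ax{AxMeet_n}$ that refutes $\ax{AxMeet_{n+1}}$. The strategy is to arrange two inertial observers $m,k$ together with a supply of bodies so that every location of $m$ carries at most $n$ bodies in common with any single location of $k$, yet some location of $m$ carries $n+1$ bodies that $k$ cannot gather at one location. Concretely, I would use two observers whose worldviews are related by a transformation that is \emph{injective on events of size $\le n$} but tears apart some particular $(n+1)$-element meeting: place $n+1$ bodies at one point of $m$'s worldview but distribute their $k$-coordinates so that no common $\vy$ exists, while every $\le n$-element sub-meeting does admit a common image. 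The field can be taken to be the reals (or any Euclidean field, to satisfy $\ax{AxFd}$ vacuously for this combinatorial construction), and one must check that $\ax{AxMeet_n}$ holds for \emph{all} pairs of observers, not just the witnessing pair; keeping the number of observers small (say two, or observers generated trivially) makes this verification manageable.

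\medskip

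\noindent\textbf{The non-implication \eqref{item-mne}.} For this I would produce a single model satisfying every $\ax{AxMeet_n}$ simultaneously yet failing $\ax{AxEv}$. The natural idea is to exploit that $\ax{Meet_\omega}$ only constrains finite meetings, whereas $\ax{AxEv}$ constrains the full (possibly infinite) event. So I would construct a worldview in which every finite meeting observed by $m$ can be matched by $k$ at some location, but some infinite event $\ev_m(\vx)$ has no single $k$-location realizing it -- its infinitely many bodies are realizable only finitely-many-at-a-time. A compactness-flavored construction is natural: I expect one can take a model with infinitely many bodies and an observer $k$ whose locations each capture arbitrarily large finite subsets of a fixed infinite $\ev_m(\vx)$ but never the whole of it.

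\medskip

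\noindent\textbf{Main obstacle.} The genuine difficulty is \eqref{item-mnm}: the separating models must satisfy $\ax{AxMeet_n}$ \emph{universally} over all observer pairs while violating $\ax{AxMeet_{n+1}}$ for one pair, and ensuring the former without accidentally forcing the latter is the delicate part. I would isolate exactly the $(n+1)$-meeting that breaks and verify that every other quantifier instance of the two axioms is respected, likely by building the model by hand with a controlled body set rather than appealing to a general preservation theorem.
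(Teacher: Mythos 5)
Your plan matches the paper's proof in all essentials: item \eqref{item-em} is the same routine derivation, and for \eqref{item-mnm} and \eqref{item-mne} the paper uses exactly your ``all\ndash but\ndash one'' distribution\textemdash $n+1$ (respectively, infinitely many) bodies all meeting at a single point for one observer, while every other observer sees, at its $i$-th location, all bodies except the $i$-th, so that every $n$-meeting (respectively, every finite meeting) is matched but the full meeting is not. The only cosmetic difference is that the paper declares every body to be an inertial observer rather than keeping a separate pair $m,k$, which makes the universal verification of $\ax{AxMeet_n}$ over all observer pairs\textemdash the step you rightly flag as the delicate one\textemdash symmetric and immediate.
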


\begin{proof}
Item \eqref{item-em} follows easily by the formulations of the axioms.

\medskip
To prove Item \eqref{item-mnm}, we are going to construct a model of
\ax{AxMeet_n} in which \ax{AxMeet_{n+1}} is not valid. Let 
  $\B=\{b_i:i\le n\}$.Let
all the bodies be inertial observers.  Let $b_0$ see all the bodies
in $\langle 0,\ldots,0\rangle$ and none of them in any other
coordinate points, i.e., let $\W(b_0,b_i,\vx)$ hold iff $\vx=\langle
0,\ldots,0\rangle$; and for all $k\neq 0$ let $b_k$ see all the
bodies but $b_i$ at coordinate points $\langle i, \ldots, i \rangle
$ for all $i\le n$, i.e., let $\W(b_k,b_i,\vx)$
hold iff $\vx=\langle j,\ldots,j\rangle$ and $i\neq j$. In this
model, all inertial observers see all the possible $n$-meetings. So
\ax{AxMeet_n} is valid in this model. However, the only inertial
observer who sees the $n+1$-meeting $\{b_0,\ldots,b_n\}$ is $b_0$.
So \ax{AxMeet_{n+1}} is not valid in this model.

\medskip
We are going to prove Item \eqref{item-mne}
by a similar model construction. The only difference is that now
$\Q$ will be infinite. For simplicity, let $\Q$ be the set of
natural numbers. Let all the other parts of the
model be defined in the same way. Now all the inertial
observers see all the possible $n$-meetings of
the bodies for all natural numbers $n$. So
\ax{AxMeet_n} is valid in this model for all natural number $n$.
Hence \ax{Meet_\omega} is valid in this model. However, only $b_0$
sees the event $\{b_1, b_2, \ldots, \}$. So \ax{AxEv} is not valid in this
model.
\end{proof}

Now we will use that there are no stationary (i.e., motionless)
light signals. So let us formalize this statement.
\begin{description}
\item[\underline{\ax{Ax(c\neq0)}}] Inertial observers do not see
stationary light signals.
\begin{multline*}
\forall mp\vx\vy\quad
\IOb(m)\land\Ph(p)\land\W(m,p,\vx)\land\W(m,p,\vy)\land x_t\neq y_t\then \vx_s\neq\vy_s.
\end{multline*}
\end{description}

\begin{prop}\label{prop-m3e}
\begin{eqnarray}
\ax{AxMeet_3},\ax{AxFd},\ax{AxPh}, \ax{Ax(c\neq 0)} \vdash  \ax{AxEv}\label{item-m3e}\\
\ax{AxMeet_2},\ax{AxFd},\ax{AxPh}, \ax{Ax(c\neq 0)} \nvdash \ax{AxEv}\label{item-m2ne}\\
\ax{Meet_\omega},\ax{AxFd},\ax{AxPh} \nvdash \ax{AxEv}\label{item-mwne}
\end{eqnarray}
\end{prop}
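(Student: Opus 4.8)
The three items comprise one derivation, \eqref{item-m3e}, and two independence results, \eqref{item-m2ne} and \eqref{item-mwne}, which I would prove by countermodels (by soundness, a model of the left-hand axioms in which $\ax{AxEv}$ fails suffices). For \eqref{item-m3e} I would first record three preliminary facts. By Item \eqref{item-em} of Prop.\ref{prop-meet}, $\ax{AxMeet_3}$ entails $\ax{AxMeet_2}$ and $\ax{AxMeet_1}$, so any two inertial observers coordinatize the same bodies. Next, $\ax{AxPh}$ and $\ax{Ax(c\neq 0)}$ together force $c_m\neq 0$ for every inertial $m$: with $\vy_s=\vx_s$ and $y_t\neq x_t$, the case $c_m=0$ of $\ax{AxPh}$ would yield a photon seen by $m$ at both $\vx$ and $\vy$, i.e.\ a stationary one, contradicting $\ax{Ax(c\neq 0)}$. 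Finally, the one geometric lemma I need is that, for an inertial observer, the coordinate points at which a fixed photon is seen all lie on a single light-like line: by $\ax{AxPh}$ any two of them are light-like separated, and for $c\neq 0$ three pairwise light-like separated points are collinear (if $u,v$ and $u+v$ are all null then $\langle u,v\rangle=0$, and orthogonal null vectors are parallel in Lorentz signature).

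Now fix inertial $m,k$ and a point $\vx$. Since $c_m\neq 0$ and $d\ge 2$, I can choose photons $p_1,p_2$ with $\W(m,p_1,\vx)$ and $\W(m,p_2,\vx)$ whose light-like lines through $\vx$ point in two different directions. For every $b$ with $\W(m,b,\vx)$, observer $m$ sees $\{b,p_1,p_2\}$ at $\vx$, so $\ax{AxMeet_3}$ gives a point $\vy_b$ at which $k$ sees all three. By the lemma $k$ sees $p_1$ on a light-like line $\ell_1$ and $p_2$ on a light-like line $\ell_2$, and $\vy_b\in\ell_1\cap\ell_2$. Granting $\ell_1\neq\ell_2$, the intersection is a single point, so $\vy_b=:\vy$ is independent of $b$ and $\ev_m(\vx)\subseteq\ev_k(\vy)$. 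The reverse inclusion follows from the same argument with $m$ and $k$ interchanged, producing $\vx'$ with $\ev_k(\vy)\subseteq\ev_m(\vx')$; since then $\ev_m(\vx)\subseteq\ev_m(\vx')$ and both $\vx,\vx'$ lie on the distinct light-like lines of $p_1,p_2$, these lines meet only at $\vx$ and force $\vx'=\vx$, giving $\ev_m(\vx)=\ev_k(\vy)$ and hence $\ax{AxEv}$. The crux is precisely the distinctness $\ell_1\neq\ell_2$: it asserts that the induced correspondence on light cones does not collapse two directions to one, and it is here — and only here — that the genuine three-fold coincidence of $\ax{AxMeet_3}$, rather than the pairwise coincidence of $\ax{AxMeet_2}$, is needed. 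I expect to derive it from the biconditional form of $\ax{AxPh}$ (which fixes the entire light-cone relation in each worldview) together with $\ax{Ax(c\neq 0)}$.

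For \eqref{item-m2ne} I would build a countermodel on the template already used for Item \eqref{item-mnm} of Prop.\ref{prop-meet}, now carrying a Euclidean field and photons. Take three special bodies; let $m$ coordinatize all three at one point, while $k$ coordinatizes each of the three pairs at a separate point and never all three at once. Choosing the light speed so that these three points are pairwise light-like separated — hence collinear on one light-like line — is meant to make the splitting compatible with a nondegenerate, $c\neq 0$ light cone; one then adds photons realizing the full light-cone relation so that $\ax{AxPh}$ and $\ax{Ax(c\neq 0)}$ hold. In such a model every pair that $m$ sees together is seen together by $k$, so $\ax{AxMeet_2}$ holds, whereas the three-body event of $m$ equals no single event of $k$, so $\ax{AxEv}$ fails.

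For \eqref{item-mwne}, where $\ax{Ax(c\neq 0)}$ is dropped, I would instead exploit the degenerate regime $c=0$. There $\ax{AxPh}$ asserts only that two coordinate points carry a common photon iff they have the same space coordinate, so the photons lie on separate space-columns and do not link spatially separated bodies. Over a Euclidean field one can then reproduce the infinite-event construction of Item \eqref{item-mne} of Prop.\ref{prop-meet} — an infinite meeting coordinatized as a whole by a single observer — now with these harmless photons added: every finite meeting is preserved, so all of $\ax{Meet_\omega}$ holds, yet the infinite event is not preserved, so $\ax{AxEv}$ fails. In both countermodels the principal obstacle, and the step I expect to occupy most of the work, is to secure the \emph{symmetric} content of the meeting axioms — that passing from $k$ back to $m$ creates no spurious coincidence — at the same time as the exact biconditional of $\ax{AxPh}$; it is to control this interaction that the light speed and the placement of the photons must be chosen so carefully.
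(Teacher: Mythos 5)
Your skeleton matches the paper's (a direct derivation for \eqref{item-m3e} via the fact that each observer sees a given photon on a line of slope $c$, plus countermodels for \eqref{item-m2ne} and \eqref{item-mwne}), but each of the three steps you explicitly defer is exactly where the substance lies, and in each case the deferred step does not go through as you propose. For \eqref{item-m3e}, the crux you name\ndash that $k$ sees the pair $\{p_1,p_2\}$ at \emph{exactly one} point\ndash cannot be derived from \ax{AxPh} and \ax{Ax(c\neq 0)} alone: those axioms fix the light-cone relation inside each worldview but place no constraint tying the worldline of a \emph{particular} photon in $k$'s view to its worldline in $m$'s view, so nothing prevents $p_1$ and $p_2$ from lying on one and the same line of slope $c_k$ for $k$ even though they diverge for $m$. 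The paper closes this by a further use of \ax{AxMeet_3} on triples containing auxiliary photons: \ax{Ax(c\neq 0)} guarantees that distinct coordinate points carry distinct sets of photons, so two candidate meeting points $\vx',\vx''$ of $p_1,p_2$ in $k$'s view yield photons $p'\in\ev_k(\vx')\setminus\ev_k(\vx'')$ and $p''\in\ev_k(\vx'')\setminus\ev_k(\vx')$ whose meeting $k$ does not see; \ax{AxMeet_3} forces $m$ to see $\{p_1,p_2,p'\}$ and $\{p_1,p_2,p''\}$, both necessarily at $\vx$, hence $m$ sees $\{p',p''\}$ and $k$ must too\ndash contradiction. Without this, your $\vy_b$ may genuinely depend on $b$ and $\ax{AxEv}$ does not follow.

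The two countermodels fail on precisely the ``symmetric content'' obstacle you flag and postpone. In your model for \eqref{item-m2ne}, if $m$ sees $b_1,b_2,b_3$ only at a single point $P$, then every photon $q$ that $k$ sees at one of the three pair-points must be seen by $m$ at $P$ (to preserve the $2$-meetings $\{b_i,q\}$); but then $m$ sees at $P$ the meeting of two photons whose lines in $k$'s worldview are distinct parallel light-like lines (one through $u_{12}$, one through $u_{13}$), a meeting $k$ never sees, so \ax{AxMeet_2} fails in the direction from $m$ to $k$. The paper avoids this by making the special bodies occupy entire \emph{horizontal hyperplanes} (so every light-like line meets every special body in both worldviews, and no spurious photon coincidences are forced) and by using infinitely many bodies, with $k$ omitting $b_i$ exactly on the hyperplane $y_t=i$; finitely many bodies at isolated points cannot work this way. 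The same repair is needed for \eqref{item-mwne}: grafting vertical photons onto the point-based model of Item \eqref{item-mne} of Prop.\ref{prop-meet} fails because a vertical line meets at most one of the diagonal points, so $k$ cannot see all the pairs $\{b_l,p\}$ that $m$ sees at the origin; the bodies must again live on horizontal hyperplanes, which every vertical line crosses. So the proposal identifies the right difficulties but resolves none of them, and its announced strategies for them are not viable.
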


\begin{proof}
First let us make some general observations.  By \ax{AxFd}, there is
no nondegenerate triangle in $Q^d$ whose sides are of slope $c$.
This is clear if $c=0$; and in the case $c\neq0$,
this can be shown by contradiction using the fact that
the vertical projection of a triangle of this kind is a triangle
whose one side is the sum of the other two sides. Therefore,
\ax{AxFd} and \ax{AxPh} together imply that any
inertial observer $m$ sees the events in which a particular photon
participates on a line of slope $c_m$.

By \ax{AxFd}, \ax{AxPh} and \ax{Ax(c\neq0)}, every inertial observer
$m$ sees different meetings of photons at different coordinate
points. This is so since (by \ax{AxFd}) for every
pair of points there is a line of slope $c_m\neq0$
containing only one of the points. Hence, by \ax{AxPh}, there is a
photon seen by $m$ only at one of the two coordinate points.

Let us now prove Item \eqref{item-m3e}.
%So assume \ax{AxMeet_3},\ax{AxFd}, \ax{AxPh}, and \ax{Ax(c\neq 0)}.
Let $m$ and $k$ be inertial observers and let $\vx$ be a coordinate
point.  To prove \ax{AxEv}, we have to find a coordinate point $\vx'$
such that $\ev_m(\vx)=\ev_k(\vx')$. To find this $\vx'$, let
$\vy=\langle x_1+c_m,x_2,\ldots,x_{d-1},x_t+1\rangle$, $\vz=\langle
x_1-c_m,x_2,\ldots,x_{d-1},x_t+1\rangle$ and $\vw=\langle
x_1,\ldots,x_{d-1},x_t+2\rangle$, see Fig.\ref{fig-meet}.

By \ax{AxPh}, there are photons $p_1$, $p_2$ and
$p_3$ such that $p_1,p_2\in\ev_m(\vx)$, $p_2,p_3\in \ev_m(\vy)$,
$p_1\in\ev_m(\vz)$ and $p_3\in\ev_m(\vw)$. Since $m$ sees every
photon on a line of slope $c_m$, he sees the meeting of $p_1$ and
$p_2$ only at $\vx$ and does not see the meeting of $p_1$ and $p_3$.

Since \ax{AxMeet_3} implies \ax{AxMeet_2}, $k$ sees the same meetings
of pairs of photons. So there is a $\vx'$ where $k$ sees $p_1$ and
$p_2$ meet. $\vx'$ is the only point where $k$ sees both $p_1$ and
$p_2$. This is so because $k$ sees different meetings of photons at
different points but sees the same $3$-meetings as $m$. So if there
were another point, say $\vx''$, where $k$ sees $p_1$ and $p_2$, there
were photons $p'\in ev_k(\vx')$ and $p''\in\ev_k(\vx'')$ such that
$p'\not\in ev_k(\vx'')$, $p''\not\in\ev_k(\vx')$ and $k$ does not see
the meeting of $p'$ and $p''$. By axiom \ax{AxMeet_3} $m$ has to see
the meetings $\{p_1,p_2,p'\}$ and $\{p_1,p_2,p''\}$. The only point
where $m$ can see these meetings is $\vx$ since $\vx$ the only point
where $m$ sees $p_1$ and $p_2$ meet.  Therefore $m$ sees the meeting
of $p'$ and $p''$ at $\vx$. Thus, by \ax{AxMeet_3}, $k$ also has to
see the meeting of $p'$ and $p''$, but $k$ does not see it. Hence
$\vx'$ is the only point where $k$ sees both $p_1$ and $p_2$.

Let $b$ be a body such that $\W(m,b,\vx)$. By \ax{AxMeet_3}, $k$ has
to see the meeting of $p_1$, $p_2$ and $b$. This
point has to be $\vx'$ since the only point where $p_1$ and $p_2$
meet is $\vx'$. Since $b$ was an arbitrary body, we have
$\ev_m(\vx)\subseteq\ev_k(\vx')$. The same argument shows that
$\ev_k(\vx')\subseteq\ev_m(\vx)$. So $\ev_m(\vx)=\ev_k(\vx')$ as
desired.

\begin{figure}
\psfrag{m}[l][l]{$m$} \psfrag{k}[l][l]{$k$} \psfrag{p}[br][br]{$p$}
\psfrag{p1}[tl][tl]{$p_1$} \psfrag{p2}[l][l]{$p_2$}
\psfrag{p3}[l][l]{$p_3$} \psfrag{x}[tl][tl]{$\vx$}
\psfrag{y}[br][br]{$\vy$}
\psfrag{z}[b][b]{$\vz$} \psfrag{w}[bl][bl]{$\vw$}
\psfrag{x'}[tl][tl]{$\vx'$}
\psfrag{b}[tl][tl]{$b$}
\psfrag{1}[tl][tl]{$1$} \psfrag{cm}[tl][tl]{$c_m$}

\includegraphics[keepaspectratio, width=\textwidth]{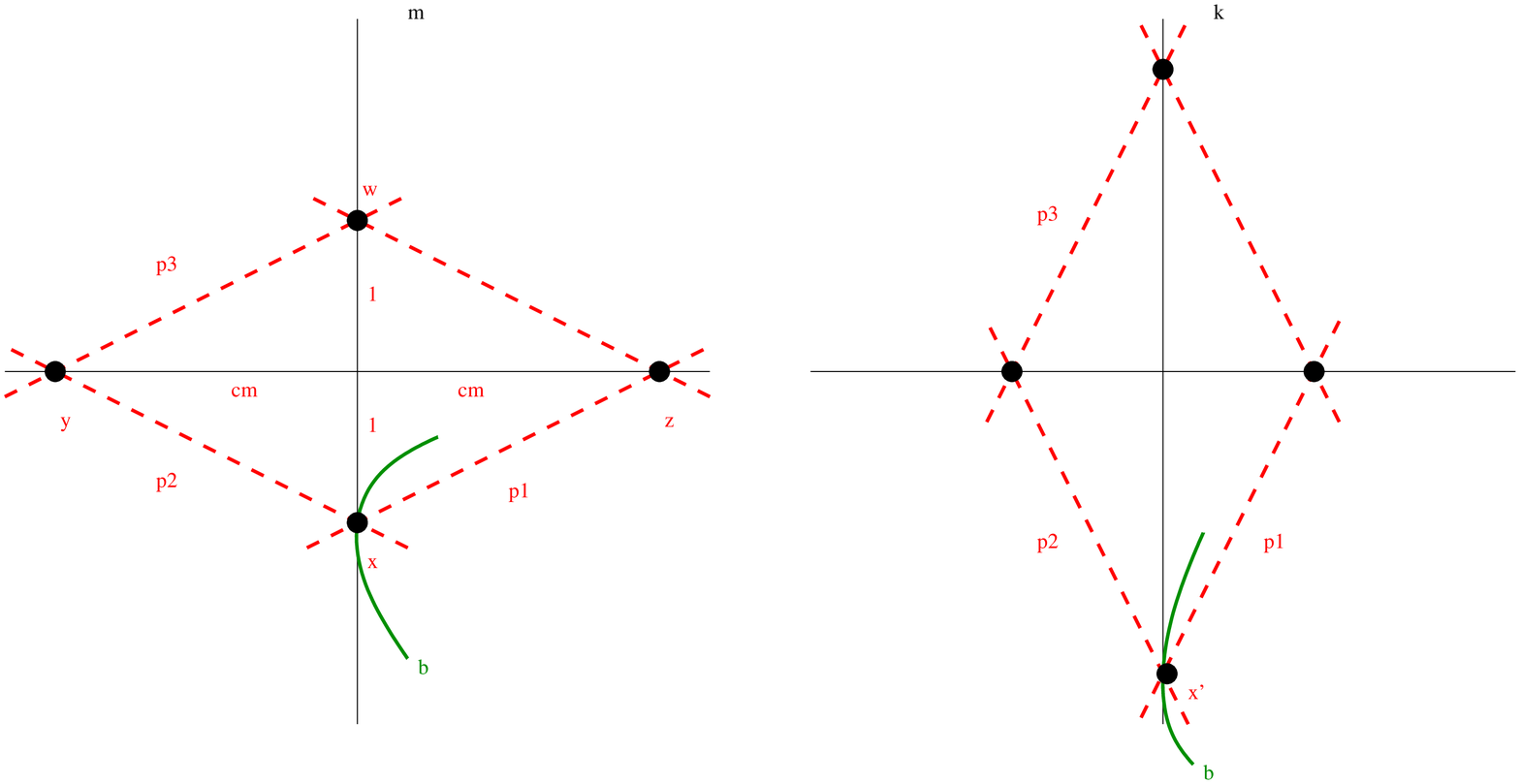}
\caption{\label{fig-meet}}
\end{figure}

\medskip
We are going to prove Item \eqref{item-m2ne}, by constructing a
model. Let $\langle\Q, +,\cdot\rangle$ be the field of real numbers.
Let us denote the set of natural numbers by $\omega$. Let
$\B=\{m,k\}\cup\{b_i:i\in\omega\}\cup\{p : p \text{ is a line of slope
  1}\}$. Let $m$ and $k$ be all the inertial observers and let the
lines of slope $1$ be all the photons. Let $m$ and $k$ see the photon
$p$ at coordinate point $\vx$ iff $\vx\in p$.  Let $m$ see all the
bodies $b_i$ at $\vx$ iff $x_t=0$. Let $k$ see all the bodies
$b_0,\ldots,b_n,\ldots$ but $b_i$ at $\vx$ iff $x_t=i$ (i.e., iff
$\vx$ is in the horizontal hyperplane $\{\vy\in\Q^d:
y_t=i\}$).\footnote{If $d=2$, vertical lines can be used instead of
  horizontal hyperplanes, which gives a counterexample with bodies
  having more natural properties.} It is straightforward from this
construction that axioms \ax{AxFd}, \ax{AxPh} and \ax{Ax(c\neq 0)} are
valid in this model.  Since every line of slope 1 intersects every
horizontal hyperplane, $m$ and $k$ see the same $2$-meetings of
bodies. Hence \ax{AxMeet_2} is also valid in this model.  However, the
only inertial observer who sees
%an event containing
the meeting $\{b_i:i\in\omega\}$ is $m$. So \ax{AxEv} is not valid in
this model.

\medskip
We prove Item \eqref{item-mwne} by a
similar construction.  The only difference is that now the set of
bodies is $\B=\{m,k\}\cup\{b_i:i\in\omega\}\cup\{p : p \text{ is a
vertical
  line}\}$; and the photons are the vertical lines. It is
straightforward from the construction that axioms \ax{AxFd},
\ax{AxPh} are valid in this model ($c=0$).  Since every vertical
line intersects every horizontal hyperplane, $m$ and $k$ see
the same $n$-meetings of bodies. Hence \ax{Meet_\omega} is also
valid in this model. However, only $m$ sees
%an event containing
the meeting $\{b_i:i\in\omega\}$. So \ax{AxEv} is not valid in this model.
\end{proof}

Prop.\ref{prop-m3e} shows that a price to weaken axiom \ax{AxEv} to
\ax{AxMeet_3} is to assume that there are no stationary light
signals. Since \ax{AxSm} contains this assumption, we can simply
replace \ax{AxEv} with \ax{AxMeet_3} in \ax{SpecRel}. A natural
continuation of this investigation can be a search for assumptions
that allow us to weaken \ax{AxMeet_3} to \ax{AxMeet_2}. A possible
candidate is that bodies move along straight lines and the dimension $d$ is at
least $3$. The proof of Item \eqref{item-m2ne} shows that assuming
only that bodies move along straight lines is not enough, if $d=2$.

We have several similar investigations on the logical connections of
axioms and predictions, see, e.g., \citep{dyn-studia}, \cite[\S
  5]{Szphd} on dynamics, \citep{twp}, \cite[\S 4,\S 7 ]{Szphd},
\citep{gcocp} on twin paradox, \citep{pezsgo} on kinematics,
time-dilation and length-contraction, twin paradox, etc.

\section{accelerated observers}\label{acc-sec}

In $\ax{SpecRel}$ we restricted our attention to inertial observers.
It is a natural idea to generalize the theory by including
accelerated observers as well.  It is explained in the classic
textbook \cite[163{\ndash}165.]{MTW} that the study of accelerated
observers is a natural first step (from special
relativity) towards general relativity.

We have not introduced the concept of observers as a basic one because
it can be defined as follows: an {\it observer} is nothing other than a
body who ``observes" (coordinatizes) some other bodies somewhere, this
property can be captured by the following formula of our language:
\begin{equation*}
\Ob(m) \defiff  \exists b\vx\enskip \W(m,b,\vx).
\end{equation*}

Our key axiom about accelerated observers is the following:
\begin{description}
\item[\ax{AxCmv}] At each moment of his life, every accelerated
  observer sees (coordinatizes) the nearby world for a short while in
  the same way as an {inertial observer} does.
\end{description}
For formulation of \ax{AxCmv} in our FOL language, see \citep{twp},
\citep{Szphd} or \citep{AMNSz-Synthese}.

Axiom \ax{AxCmv} ties the behavior of accelerated observers to those
of inertial ones. Justification of this axiom is given by
experiments. We call two observers \textit{co-moving} at an event if
they ``see the nearby world for a short while in the same way'' at
the event. By this notion \ax{AxCmv} says that at each event of an
observer's life, he has a co-moving inertial observer. We can think
of a dropped spacepod as a co-moving inertial observer of an
accelerated spaceship (at the event of dropping). Or, if a spaceship
switches off its engines, it will move on as a co-moving inertial
spaceship would.

Our next two axioms ensure that the worldviews of accelerated
observers are big enough. They are generalized versions of the
corresponding axioms for inertial observers, but now postulated for
all observers.

\begin{description}
\item[\underline{$\ax{AxEv^-}$}] If $m$ sees $k$ in an event, then $k$
cannot deny it:
\begin{equation*}\forall m,k\in\Ob\enskip  \W(m,k,\vx)\rightarrow\exists
\vy\enskip \ev_m(\vx)=\ev_k(\vy).
\end{equation*}
\item[\underline{$\ax{AxSf^-}$}]  Any observer  sees  himself in  an
  interval of the time axis:
\begin{multline*}
 \forall m\in\Ob\enskip\forall \vx\enskip \W(m,m,\vx)\then x_1=x_2=x_3=0
 \quad \text{ and }\\
\forall \vx\vy\quad \W(m,m,\vy)\land\W(m,m,\vx)\then \forall t\enskip
x_t<t<y_t\rightarrow  \W(m,m,0,0,0,t).
\end{multline*}
\end{description}

Our last two axioms will ensure that the worldlines of accelerated
observers are ``tame" enough, e.g., they have velocities at each
moment. In \ax{SpecRel}, the worldview transformations between inertial
observers are affine maps, the next axiom will state that the
worldview transformations between accelerated observers are
approximately affine, wherever they are defined.%\bigskip

\begin{description}
\item[\underline{\ax{AxDf}}] The worldview transformations have
linear approximations at each point of their domain (i.e., they
are differentiable).
\end{description}
For a precise formalization of \ax{AxDf}, see, e.g., \citep{AMNSz-Synthese}.

We note that \ax{AxDf} implies that the worldview transformations are
functions with open domains. However, if the numberline has gaps,
still there can be crazy motions. Our last assumption is an axiom
scheme supplementing \ax{AxDf} by excluding these gaps.

\begin{description}
\label{p-cont}
\item[\underline{\ax{Cont}}] Every definable, bounded and nonempty
  subset of $\Q$ has a supremum (i.e., least upper bound).
\end{description}
In \ax{Cont} ``definable'' means ``definable in the language of
\ax{AccRel}, parametrically.'' For a precise formulation of \ax{Cont},
see \cite[692.]{twp} or \cite[\S 10.1]{Szphd}.
%Our axiom scheme of continuity
\ax{Cont} is a ``mathematical axiom" in spirit. It is Tarski's FOL
version of Hilbert's continuity axiom in his axiomatization of
geometry, see \cite[61{\ndash}162.]{Gol}, fitted to the language of
\ax{AccRel}.  When $\Q$ is the field of real numbers, \ax{Cont} is
automatically true.

That \ax{Cont} requires the existence of supremum only for sets
definable in the language of \ax{AccRel} instead of every set, is
important not only because by this trick we can keep our theory
within FOL (which is crucial in a foundational work), but also
because it makes this postulate closer to the the physical/empirical
level. The latter is true because \ax{Cont} does not speak
about ``any fancy subset'' of the quantities, just those
``physically meaningful'' sets which can be defined in the language
of our (physical) theory.

Adding this 5 axioms to \ax{SpecRel}, we get an axiom system for
accelerated observers:
\begin{equation*}
\ax{AccRel}\de\ax{SpecRel}\cup\{\ax{AxCmv}, \ax{AxEv^-}, \ax{AxSf^-},
\ax{AxDf}\}\cup\ax{Cont}.
\end{equation*}

\noindent As an example we show that the so-called \textit{twin
  paradox} can be naturally formulated and analyzed logically in
\ax{AccRel}.  Our axiomatic approach also makes it possible to analyze
the details of the twin paradox (e.g., who sees what, when) with the
clarity of logic, see \cite[139{\ndash}150.]{pezsgo} for part of such an
analysis.

According to the twin paradox, if a twin makes a journey into space
(accelerates), he will return to find that he has aged less than his
twin brother who stayed at home (did not accelerate). We formulate
the twin paradox in our FOL language as follows.

\begin{description}
\item[\underline{\ax{TwP}}]  Every inertial observer $m$
 measures at least as much time as any other observer $k$ between any
 two events $e_1$ and $e_2$ in which they meet; and they measure the same time iff they
 have encountered the very same events between $e_1$ and $e_2$:
\begin{multline*}
\forall m \in \IOb\enskip \forall k\in \Ob\enskip \forall
\vx\vx'\vy\vy' \enskip x_t<y_t \land x'_t<y'_t \,\land \\
m,k\in\ev_m(\vx)=\ev_k(\vx')\land m,k\in\ev_m(\vy)=\ev_k(\vy') \then
y'_t-x'_t\le y_t-x_t \\ \land\big( y'_t-x'_t=y_t-x_t \iff
enc_m(\vx,\vy)=enc_k(\vy',\vy') \big),
\end{multline*}
\end{description}
where  $enc_m(\vx,\vy)=\{\ev_m(\vz): \W(m,m,\vz) \land x_t\le z_t\le y_t\}$.

\begin{thm}\label{thm-twp}
\begin{eqnarray}
\ax{AccRel}\vdash\ax{TwP}\\
\ax{AccRel}-\ax{AxDf}\vdash\ax{TwP}\\
\ax{AccRel}-\ax{Cont}\nvdash\ax{TwP}\\
\ax{Th(\mathbb{R})}\cup\ax{AccRel}-\ax{Cont}\nvdash\ax{TwP}\label{item-nothr}
\end{eqnarray}
\end{thm}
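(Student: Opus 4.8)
The plan is to handle the two derivations (lines one and two) by a single proper-time comparison and the two independence results (lines three and four) by a single model-construction template. For the derivations, fix an inertial $m$ and an arbitrary observer $k$ meeting at $e_1$ (coordinatized as $\vx$ by $m$ and $\vx'$ by $k$) and at $e_2$ (as $\vy$ and $\vy'$); the goal is $y'_t-x'_t\le y_t-x_t$. By $\ax{AxSf^-}$ the number $y'_t-x'_t$ is exactly the time $k$ reads on his own time axis, so it suffices to bound the elapsed proper time of $k$'s worldline, seen in $m$'s coordinates, by the coordinate-time gap $y_t-x_t$. The engine is $\ax{AxCmv}$: at each event of $k$'s life there is a co-moving inertial observer $h$, and by Thm.\ref{thm-poi} the transformation $\w_{mh}$ is a Poincar\'e transformation, so on a short segment $k$'s elapsed time equals $\sqrt{(\Delta t)^2-|\Delta\vx_s|^2}\le\Delta t$ in $m$'s coordinates. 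I would partition $[x_t,y_t]$, bound $k$'s proper time segment by segment in this way, and invoke $\ax{Cont}$ to pass from the finite approximations to the exact inequality.

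For the equality clause I would argue that $y'_t-x'_t=y_t-x_t$ forces the spatial contribution $|\Delta\vx_s|$ to vanish along the whole worldline, so $k$ never leaves $m$'s time axis and the two meet precisely the same events, giving $enc_m(\vx,\vy)=enc_k(\vx',\vy')$; the converse is immediate. Line two, that $\ax{AxDf}$ is dispensable, then follows by inspection: the argument above uses only $\ax{AxCmv}$ (which supplies the local inertial comparison outright) together with $\ax{Cont}$, and never differentiates the worldview transformation. Hence dropping $\ax{AxDf}$ from $\ax{AccRel}$ leaves the derivation intact.

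For lines three and four I would exhibit models of $\ax{AccRel}-\ax{Cont}$ (respectively of $\ax{Th(\mathbb{R})}\cup(\ax{AccRel}-\ax{Cont})$) in which $\ax{TwP}$ fails. The template is to choose a quantity field $\Q$ that is \emph{not} Dedekind complete and to build a differentiable accelerated worldline exploiting a gap of $\Q$: because the least upper bound underlying the proper-time comparison is missing, the accelerated twin can be arranged to read a time at least as large as the inertial twin's while genuinely moving relative to him, so the inequality — or else its equality clause — breaks. One must then verify that this worldline still validates $\ax{AxCmv}$, $\ax{AxEv^-}$, $\ax{AxSf^-}$ and $\ax{AxDf}$, so that every axiom except $\ax{Cont}$ holds. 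For line three any Euclidean, non-complete $\Q$ will do. For line four I need in addition $\Q\models\ax{Th(\mathbb{R})}$, i.e.\ $\Q$ real closed; here I would take a real-closed field that is nevertheless not Dedekind complete — for instance the real algebraic numbers, or a proper elementary extension of $\mathbb{R}$ — so that $\ax{Th(\mathbb{R})}$ holds while some set definable in the richer $\ax{AccRel}$-language has no supremum.

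The hard part will be line four. Over a real-closed field the field-definable sets are semialgebraic and thus tame, so the gap I exploit must be invisible in the field language yet visible to sets defined through the worldview relation $\W$; producing a worldline that is genuinely differentiable (for $\ax{AxDf}$) and co-moving-inertial at each event (for $\ax{AxCmv}$) while still defeating the proper-time inequality is the delicate construction. The equality clause of line one is the second sensitive point, since that is where the geometric heart of the twin paradox — maximality of elapsed time for the unaccelerated twin — has to be extracted, and it is precisely this clause that the $\ax{Cont}$-free counterexamples are designed to destroy.
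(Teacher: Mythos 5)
The paper itself contains no proof of Thm.\ref{thm-twp}: it defers entirely to \citep{twp} and \cite[\S 7]{Szphd}. Your overall plan does coincide with the strategy of those sources \ndash\ prove the inequality by comparing $k$'s clock with the clocks of co-moving inertial observers supplied by \ax{AxCmv}, using Thm.\ref{thm-poi} for the local Lorentz factor; observe that \ax{AxDf} is never invoked, which gives the second line; and refute the \ax{Cont}-free versions by models over non-complete ordered fields, taking for the fourth line a non-complete real closed field such as the real algebraic numbers, with the gap invisible to the field language but visible through $\W$. That last identification is exactly the published construction.

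The genuine gap is in the step ``partition $[x_t,y_t]$, bound $k$'s proper time segment by segment, and invoke \ax{Cont} to pass from the finite approximations to the exact inequality.'' This cannot be executed as stated, for two reasons. First, \ax{AxCmv} gives agreement between $k$ and a co-moving inertial observer only ``for a short while,'' i.e.\ to first order at a single event; on a segment of positive length $k$'s elapsed time is \emph{not} $\sqrt{(\Delta t)^2-|\Delta\vx_s|^2}$, and the error must be controlled. Second, $\Q$ is only assumed Euclidean, hence possibly non-Archimedean, so there is no refining sequence of finite partitions converging to anything: ``passing to the limit'' is not an available operation. What the cited proof actually does is a continuous-induction (first-order mean-value) argument: the set of instants up to which the inequality holds is parametrically definable, nonempty and bounded; \ax{Cont} yields its supremum; and the local first-order estimate from \ax{AxCmv} gives a contradiction if that supremum is below $y_t$ or not attained. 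This supremum mechanism, not partition refinement, is the heart of the positive half, and the same device is needed to propagate the strict local inequality $\sqrt{1-v^2}<1$ globally in your equality-clause argument. Likewise, your counterexample ``template'' for lines three and four leaves the actual work undone: the model must still satisfy all of \ax{SpecRel} together with \ax{AxCmv}, $\ax{AxEv^-}$, $\ax{AxSf^-}$ and \ax{AxDf}, which forces the offending worldline to be smooth and locally inertial at every point of its gap-split domain; verifying this is precisely the delicate part you flag but do not carry out.
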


For the proof of Thm.\ref{thm-twp}, see \citep{twp} or
\citep[\S 7]{Szphd}.

Item \eqref{item-nothr} of Thm.\ref{thm-twp} states that \ax{Cont}
cannot be replaced with the whole FOL theory of real numbers in
\ax{AccRel} if we do not want to loose \ax{TwP} from
its consequences.

Our theory $\ax{AccRel}$ is also strong enough to predict the
gravitational time-dilation effect of general relativity via
Einstein's equivalence principle, see \citep{MNSz}, \citep{Szphd}.

\section{general relativity}\label{gen-sec}

Our theory of accelerated observers \ax{AccRel} speaks about two kinds
of observers, inertial and accelerated ones. Some axioms are
postulated for inertial observers only, some apply to all
observers. We get an axiom system \ax{GenRel} for general relativity
by stating the axioms of \ax{AccRel} in a generalized form in which
they are postulated for all observers, inertial and accelerated ones
equally. In other words, we will change all axioms of \ax{AccRel} in
the same spirit as $\ax{AxSf^-}$ and $\ax{AxEv^-}$ were obtained from
\ax{AxSf} and \ax{AxEv}, respectively.
This kind of change \ax{AccRel} $\mapsto$ \ax{GenRel} can be
regarded as a ``democratic revolution'' with the slogan ``all
observers should be equivalent, the same laws should
apply to all of them.'' Here ``law'' translates as ``axiom.'' This
idea originates with Einstein (see his book \cite[Part II,
ch.18]{Ein}).

For simplicity, we will use an equivalent version of the symmetry
axiom \ax{AxSm} (see \cite[Thm.2.8.17(ii), 138.]{pezsgo} or
\cite[Thm.3.1.4, 21.]{Szphd}), and we will require the speed of
photons to be 1 in $\ax{AxPh^-}$ (as opposed to requiring it in
$\ax{AxSm^-}$).

\begin{description}
\item[\underline{$\ax{AxPh^-}$}] The velocity of photons an observer ``meets" is 1
when they meet, and it is possible to send out a photon in each
direction where the observer stands.
\item[\underline{$\ax{AxSm^-}$}] Meeting observers see each other's clocks slow
down with the same rate.
\end{description}
For a precise formulation of these axioms, see \citep{AMNSz-Synthese},
\citep{Szphd}.

We introduce an axiom system for general relativity as the collection
of the following axioms:
\begin{equation*}
\ax{GenRel}\de\{\ax{AxFd},\ax{AxPh^-},\ax{AxEv^-},\ax{AxSf^-},\ax{AxSm^-},\ax{AxDf}\}\cup\ax{Cont}.
\end{equation*}

Axiom system \ax{GenRel} contains basically the same axioms as
\ax{SpecRel}, the difference is that they are assumed only locally but
for all the observers.

Thm.\ref{grcomp-thm} below states that the models of
\ax{GenRel} are exactly the spacetimes of usual general relativity.
For the notion of a Lorentzian manifold we refer to
\cite[55.]{dinverno}, \cite[241.]{MTW} and \cite[sec.3.2]{logst}.

\begin{thm}[Completeness theorem]\label{grcomp-thm}
$\ax{GenRel} $ is complete with respect to its standard models,
i.e., with respect to
  Lorentzian Manifolds over real closed fields.
\end{thm}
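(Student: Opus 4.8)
The plan is to establish a two-way representation theorem: the models of $\ax{GenRel}$ are, up to isomorphism, exactly the Lorentzian manifolds over real closed fields. Completeness with respect to these standard models then follows from the soundness and completeness of FOL, since $\ax{GenRel}\vdash\varphi$ iff $\varphi$ holds in every model, hence iff it holds in every standard model.

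For the easy direction---that every standard model satisfies the axioms---I would start from a real closed field $R$ and a manifold modeled on $R^d$ carrying a metric $g$ of Lorentzian signature, and build a two-sorted structure whose quantity sort is $R$, whose bodies comprise the local coordinate systems (the observers) together with the null geodesics (the photons), and whose worldview relation is read off from the charts. Then I would verify the axioms locally: $\ax{AxFd}$ holds because a real closed field is Euclidean; $\ax{Cont}$ holds because real closed fields are o-minimal, so every bounded nonempty definable subset has a supremum; $\ax{AxDf}$ holds because chart transitions are smooth; and $\ax{AxPh^-}$, $\ax{AxSm^-}$, $\ax{AxEv^-}$, $\ax{AxSf^-}$ hold because, in each tangent space, $g$ reduces to the Minkowski form, so light propagates at unit speed and meeting observers' clock rates agree. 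These are routine computations in the local inertial frames.

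The substantive half is the converse. Let $\M\models\ax{GenRel}$. First I would show the quantity part $\langle\Q,+,\cdot\rangle$ is real closed: $\ax{AxFd}$ already makes it Euclidean, and $\ax{Cont}$ yields the intermediate value property for polynomials (the bounded definable set on which a polynomial is negative has a supremum at which it vanishes), so every odd-degree polynomial has a root and $\Q$ is real closed. Next I would build the manifold whose points are the events: each observer's worldview gives a chart into $\Q^d$, and the worldview transformations $\w_{mk}$ serve as transition maps, differentiable with open domains by $\ax{AxDf}$. I would upgrade these to local diffeomorphisms---invertibility from $\w_{mk}$ being a bijection between overlapping event sets, smoothness of the inverse by differentiating $\w_{mk}\circ\w_{km}=\mathrm{id}$---and use $\ax{AxEv^-}$ to see that the charts cover the events, giving a smooth atlas over $\Q$. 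Finally I would reconstruct the metric: at each event the local light cones of $\ax{AxPh^-}$ supply the null directions, which in dimension $d\ge 3$ fix $g$ up to a positive conformal factor, and $\ax{AxSm^-}$ pins down that factor, yielding a Lorentzian $g$ of the correct signature, smooth by $\ax{AxDf}$ and $\ax{Cont}$.

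The main obstacle I expect is this last reconstruction together with the smooth patching of the atlas. The entire differential geometry must be carried out over a possibly non-Archimedean real closed field $\Q$ rather than over $\mathbb{R}$, so the usual analytic tools---partitions of unity, second countability---are unavailable and must be replaced by definable arguments licensed by $\ax{Cont}$ and $\ax{AxDf}$. Recovering $g$ from its null cones, while standard over $\mathbb{R}$, has to be done definably and checked to yield the right signature at every point, and the resulting atlas-with-metric must be verified to be a genuine Lorentzian manifold in the intended generalized sense and to invert the construction of the easy direction. This patching of purely relational, frame-by-frame data into a single globally defined smooth metric is where the bulk of the technical work lies.
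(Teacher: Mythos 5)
Your overall architecture coincides with the one the paper itself indicates: the paper does not prove Theorem~\ref{grcomp-thm} in situ, but describes precisely your plan\textemdash a two-way representation (a model of $\ax{GenRel}$ built from every Lorentzian manifold over a real closed field, and conversely every model of $\ax{GenRel}$ shown to arise this way), after which completeness is immediate from the soundness and completeness of FOL\textemdash and defers all details to the cited works. So there is no disagreement at the level of strategy, and your converse direction (real-closedness of $\Q$ from $\ax{AxFd}$ plus $\ax{Cont}$, an atlas from the worldview transformations via $\ax{AxDf}$ and $\ax{AxEv^-}$, the metric from the null cones of $\ax{AxPh^-}$ normalized by $\ax{AxSm^-}$) is a faithful reconstruction of that route.

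There is, however, one step in your ``easy'' direction that fails as written: the verification of $\ax{Cont}$ ``because real closed fields are o-minimal.'' O-minimality of the field $R$ controls only sets definable in the ordered-field language, i.e., semialgebraic subsets of $R$. But $\ax{Cont}$ demands a supremum for every bounded nonempty subset of $\Q$ that is parametrically definable \emph{in the full two-sorted language of the theory}: the defining formula may use $\W$, $\IOb$, $\Ph$ and quantify over bodies. For an arbitrary atlas-with-metric over a non-Archimedean real closed field, these relations can carve out subsets of $\Q$ that are nowhere near semialgebraic and have no least upper bound, in which case your constructed structure is simply not a model of $\ax{GenRel}$ and soundness breaks. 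The fix\textemdash which is in effect what the detailed treatments the paper cites build into the very definition of ``Lorentzian manifold over a real closed field''\textemdash is to require the manifold, its charts and its metric to be definable/tame over $R$, so that the expanded structure remains definably complete; without some such restriction (or an alternative verification of $\ax{Cont}$) the theorem as you have set it up is not established. A smaller overclaim of the same flavor: $\ax{AxDf}$ yields only differentiability of the worldview transformations, not smoothness, so the atlas you extract in the converse direction is a priori only $C^1$, and matching it to the intended class of standard models requires either weakening the regularity demanded of those models or a further bootstrapping argument.
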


This theorem can be regarded as a completeness theorem in the
following sense. Let us consider Lorentzian manifolds as intended
models of \ax{GenRel}. How can we do that? We give a method for
constructing a model of \ax{GenRel} from each Lorentzian manifold;
and conversely, we show that each model of \ax{GenRel} is obtained
this way from a Lorentzian manifold.  After this is elaborated, we
have defined what we mean by a formula $\varphi$ in the language of
\ax{GenRel} being valid in a Lorentzian manifold. Then completeness
means that for any formula $\varphi$ in the language of \ax{GenRel},
we have $\ax{GenRel}\vdash \varphi$ iff $\varphi$ is valid in all
Lorentzian manifolds over real closed fields. This is completely
analogous to the way in  which Minkowskian spacetimes were regarded
as intended models of \ax{SpecRel} in the completeness theorem of
\ax{SpecRel}, see \cite[Thm.11.28, 681.]{logst} and \cite[\S
4]{Mphd}.

We call the worldline of an observer \textit{timelike geodesic}, if
each of its points has a neighborhood within which this observer
``maximizes measured time (wrist-watch time)" between any two
encountered events. For formalization of this concept in our FOL
language, see, e.g., \citep{AMNSz-Synthese}.

According to the definition above, if there are only a few observers,
then it is not a big deal that a worldline is a time-like geodesic (it
is easy to be maximal if there are only a few to be compared to). To
generate a real competition for the rank of having a timelike geodesic
worldline, we postulate the existence of many observers by the
following axiom scheme of comprehension.

\begin{description}
\item[\underline{\ax{Compr}}] For any parametrically definable
  timelike curve in any observers worldview, there is another observer
  whose worldline is the range of this curve.
\end{description}

A precise formulation of \ax{Compr} can be obtained from that of its
variant in \cite[679.]{logst}.

An axiom schema \ax{Compr} guarantees that our definition of a
geodesic coincides with that in the literature on Lorentzian
manifolds. Therefore we also introduce the following theory:
\begin{equation*}
\ax{GenRel^+}\de \ax{GenRel}\cup\ax{Compr}.
\end{equation*}
So in our theory \ax{GenRel^+}, our concept of
timelike geodesic coincides with the standard
concept in the literature on general relativity.
All the other key concepts of general relativity,
such as curvature or Riemannian tensor field, are definable from
timelike geodesics. Therefore we can treat all these
concepts (including the concept
of metric tensor field) in our theory \ax{GenRel^+} in a natural
way.

In general relativity, Einstein's field equations (EFE) provide the
connection between the geometry of spacetime and the energy-matter
distribution (given by the energy-momentum tensor field). Since in
\ax{GenRel^+} all the geometric concepts of spacetime are definable,
we can use Einstein's equation as a definition of the
energy-momentum tensor, see, e.g., \citep{Benda} or \cite[\S 13.1,
169.]{dinverno}, or we can extend the language of \ax{GenRel^+}
with the concept of energy-momentum tensor and assume Einstein's
equations as axioms. As long as we do not assume anything more of
the energy-momentum tensor than its connection to the geometry
described by Einstein's equations, there is no real difference in
these two approaches. In both approaches, we can add extra
conditions about the energy-momentum tensor to our theory, e.g., the
dominant energy condition or, e.g., that the spacetimes are vacuum
solutions.

\section{can physics give feedback to logic?}

There is observational
evidence suggesting that in our physical
universe there exist regions supporting potential non-Turing
computations. Namely, it is possible to design a physical
device in relativistic spacetime which can compute a non-Turing
computable task, e.g., which can decide whether ZF set theory is
consistent. This empirical evidence is making the theory of
hypercomputation more interesting and gives new
challenges to the physical Church Thesis, see, e.g.,
\citep{ANN-natcomp}.

These new challenges do more than simply providing a further
connection between logic and spacetime theories; they also motivate
the need for logical understanding of spacetime theories.

\section{concluding remarks}

We have axiomatized both special and general relativity in FOL.
Moreover, via our theory \ax{AccRel}, we have axiomatized general
relativity so that each of its axioms can be traced
back to its roots in the axioms of special relativity.
Axiomatization is not our final goal. It is merely
an important first step toward logical and conceptual
analysis.  We are only at the beginning of our
ambitious project.\footnote[1]{This research is
  supported by the Hungarian Scientific Research Fund for basic
  research grant No.~T81188, as well as by a Bolyai grant
  for J.~X.~Madar\'asz.}

\bigskip\bigskip 
\noindent {\sc H.~Andr\'eka, J.~X.~Madar{\'asz}, \\
I. N\'emeti, G. Sz{\'e}kely\\} Alfr{\'e}d R{\'e}nyi
Institute of Mathematics\\ of the Hungarian Academy of Sciences\\Budapest P.O.
Box 127, H-1364 Hungary\\ {\tt andreka@renyi.hu, madarasz@renyi.hu\\
nemeti@renyi.hu,  turms@renyi.hu}

\end{document}